\newcommand{\ket}[1]{|#1 \rangle}
\newcommand{\bra}[1]{\langle#1 |}
\newcommand{\lr}[1]{\left( #1 \right)}
\newcommand{\mean}[1]{\langle #1 \rangle}
\newcommand{\no}{\nonumber}
\newcommand{\mc}[1]{\mathcal{#1}}
\newcommand{\E}{\mathcal{E}}
\newcommand{\G}{\mathcal{G}}
\newcommand{\Mis}{\textsc{Mis }}
\newcommand{\Toy}{\textrm{Toy}}
\newcommand{\UD}{\textrm{UD}}
\newcommand{\eff}{\textrm{eff}}
\newcommand{\Ryd}{\textnormal{Ryd}}
\newtheorem{theorem}{Theorem}
\begin{document}

\title{Computational complexity of the Rydberg blockade in two dimensions}

\author{Hannes Pichler}
\thanks{These authors contributed equally to this work.}
\affiliation{ITAMP, Harvard-Smithsonian Center for Astrophysics, Cambridge, MA 02138, USA }
\affiliation{Department of Physics, Harvard University, Cambridge, MA 02138, USA }
\author{Sheng-Tao Wang}
\thanks{These authors contributed equally to this work.}
\affiliation{Department of Physics, Harvard University, Cambridge, MA 02138, USA }
\author{Leo Zhou}\thanks{These authors contributed equally to this work.}
\affiliation{Department of Physics, Harvard University, Cambridge, MA 02138, USA }
\author{Soonwon Choi}
\affiliation{Department of Physics, Harvard University, Cambridge, MA 02138, USA }
\affiliation{Department of Physics, University of California Berkeley, Berkeley, CA 94720, USA }
\author{Mikhail D. Lukin}
\affiliation{Department of Physics, Harvard University, Cambridge, MA 02138, USA }

\begin{abstract}
We discuss the computational complexity of finding the ground state of the two-dimensional array of quantum bits that interact via strong van der Waals interactions.  Specifically,  we focus on systems where the interaction strength between two spins depends only on their relative distance $x$ and decays as $1/x^6$ that have been realized with  individually trapped homogeneously excited neutral atoms interacting via the so-called Rydberg blockade mechanism.  
We show that the solution to NP-complete problems can be encoded in ground state of such a many-body system by a proper geometrical arrangement of the atoms.
We present a reduction from the NP-complete maximum independent set problem on planar graphs with maximum degree three. Our results demonstrate that computationally hard optimization problems can be naturally addressed with coherent quantum optimizers accessible in near term experiments.
\end{abstract}

\date{\today}

\maketitle
\section{Introduction}
Various quantum algorithms have been proposed in recent years to solve combinatorially hard optimization problems \cite{Farhi:2001hya,Smelyanskiy:2001vo,Das:2008hz,{Albash:2018he},{Santoro:2002cra},Boixo:2013ha,Johnson:2011gd,{Ronnow:2014fd},{Houck:2012iqa},{Smolin:2013vp},{Wang:2013tg},{Shin:2014wf},{Harrow:ve}}.
The fundamental idea is  based on encoding such problems in the classical ground state of a programmable quantum system, such as spin models \cite{{Albash:2018he}}.
Quantum algorithms are then designed to utilize quantum evolution in order to drive the system into this ground state, such that a subsequent measurement reveals the solution \cite{Farhi:2001hya,Farhi:2014wk,Farhi:2016vm}. 

The models considered in this context are typically based on many-body spin systems \cite{{Albash:2018he}}. Assuming a complete control of the interactions between the spins, it is possible to encode NP-complete optimization problems \cite{Lucas:2014eb} into ground states of such systems. 
In most realizations, however, interactions are not fully programmable, but instead determined by detailed properties of their specific physical realizations.
These properties include locality, geometric connectivity, and controllability, which either constrain the class of problems that can be efficiently realized~\cite{{Albash:2018he}}, or imply that substantial overhead is required for their realization \cite{Lechner:2015iy,Glaetzle:2017hh}. 
Thus, one of the central challenges in understanding and assessing quantum optimization algorithms in near-term devices involves designing  methods to encode  important classes of combinatorial problems in physical systems in an efficient and natural way \cite{Preskill:2018gt}.

\begin{figure}[b]
\begin{center}
\includegraphics[width=0.5\textwidth]{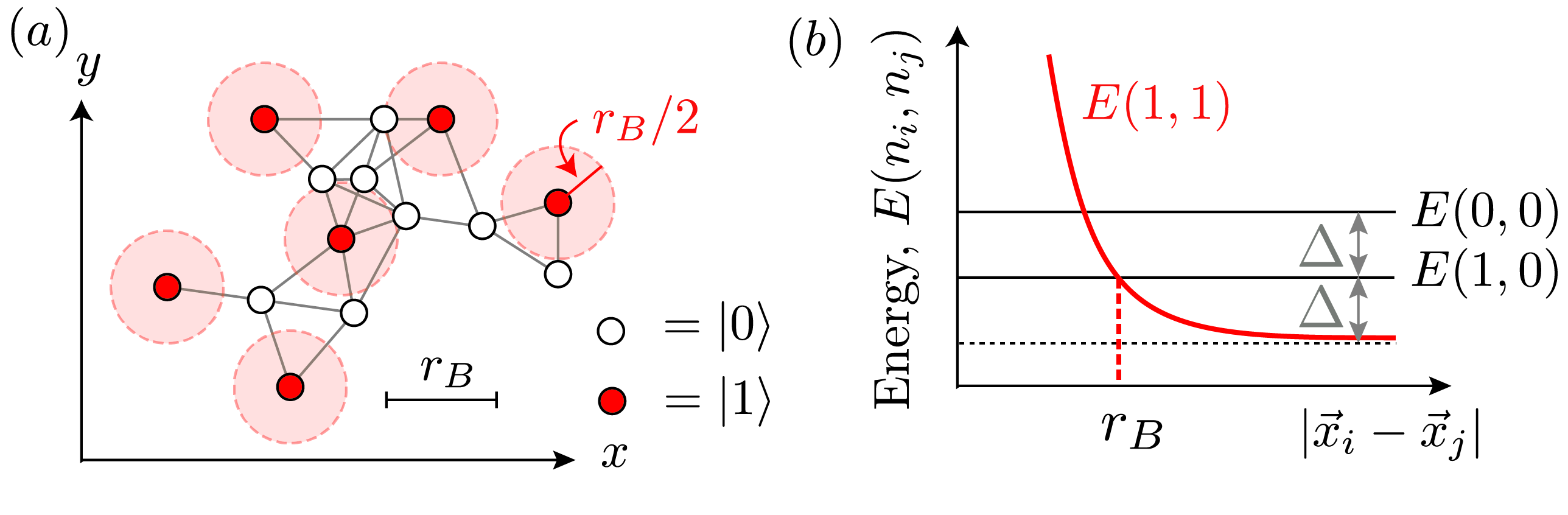}
\caption{(a) Illustration of the Rydberg problem. $N$ atoms are arranged in the 2D plane with positions $\vec{x}_v$. Each atom can be in one of two internal states, $\ket{0}$ (white) or $\ket{1}$ (red). The Rydberg blockade mechanism prevents two atoms from being simultaneously in state $\ket{1}$ if they are within a distance $r_B$ [see (b)]. For clarity, two atoms $i$ and $j$ are connected by a gray line if $|\vec{x}_i-\vec{x}_j|<r_B$. (b) The atoms interact pairwise if in state $\ket{1}$, with an interaction strength that decays as $1/x^6$ where $x$ is the distance  of the two atoms. Energy curves are plotted for a homogeneous detuning $\Delta_v=\Delta>0$ and $\Omega=0$.
The blockade radius $r_B$ corresponds to the distance where the ground state switches from the configuration with both atoms in the Rydberg state, $\ket{1}$, at $|\vec{x}_1-\vec{x}_2|>r_B$, to the configuration where only one of them is in the Rydberg state, at $|\vec{x}_1-\vec{x}_2|<r_B$.
}\label{FigRydberg}
\end{center}
\end{figure}

Recently, significant progress in realizing spins models with trapped neutral atoms has been made \cite{Bernien:2017bp,Labuhn:2016dp,Lienhard:2018fm,GuardadoSanchez:2018jf,Levine:2018uk}. Using optical tweezers, atoms can be individually arranged in fully programmable arrays in one \cite{Endres:2016fka}, two \cite{Barredo:2016ea,Kim:2016hx} and even three dimensions \cite{Barredo:2018ee}. These systems 
use lasers to coherently excite atoms from their internal ground state, $\ket{0}$, to long-lived Rydberg states, $\ket{1}$ \cite{Bernien:2017bp,Labuhn:2016dp,Lienhard:2018fm,GuardadoSanchez:2018jf,Levine:2018uk}. If two atoms are both in this Rydberg state, they interact via strong van der Waals interactions \cite{Saffman:2010ky}. These systems are described by the Hamiltonian
 \begin{align}\label{eq:HRYD}
H_{\rm Ryd}&=\sum_{v}\Omega_v \sigma_v^x-\Delta_v n_v+\sum_{w> v} V_{\rm Ryd}(|\vec{x}_v-\vec{x}_w|)n_v n_w.
\end{align}
Here $\vec{x}_v$ specifies the position of atom $v$. The parameters $\Omega_v$, and $\Delta_v$ characterize the Rabi frequency and the detuning of the coherent laser at the position $\vec{x}_v$. The operator $\sigma_v^x=\ket{0}_v\bra{1}+\ket{1}_v\bra{0}$ gives rise to coherent spin flips of atom $v$ and $n_v=\ket{1}_v\bra{1}$ counts if the atom $v$ is in the Rydberg state.  In this work we consider isotropic Rydberg states, where the interatomic interaction strength depends only on the relative atomic distance, and is given by $V_{\rm Ryd}(|\vec{x}|)=C/|\vec{x}|^6$ \cite{Saffman:2010ky}. 
The strong interactions at short distances energetically prevent two atoms to simultaneously be excited in the Rydberg state $\ket{1}$ if they are within a blockade radius, $r_B$ (see Fig.~\ref{FigRydberg}), resulting in the so-called blockade mechanism \cite{Saffman:2010ky}.

In this Article, we analyze the computational complexity of the problems that can be directly encoded in many-body systems described by $H_\Ryd$.
This is important in light of the potential applications of these systems for quantum optimization \cite{MISPRL,Preskill:2018gt}. While the complete control of the atomic positions results in a high degree of programmability, the geometry of the interaction clearly restricts the class of spin models that can be realized.
Nonetheless, as the main result of this paper, we prove:
\begin{theorem}\label{thm:main}
It is NP-complete to decide whether the ground state energy of $H_\Ryd$ is below a given threshold, when $\Omega_v=0$, as long as the atoms can be positioned arbitrarily in at least two dimensions.
\end{theorem}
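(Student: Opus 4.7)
\medskip
\noindent\textbf{Proof plan.}

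The plan is to establish membership in NP first and then reduce from Maximum Independent Set on planar graphs of maximum degree $3$ (\textsc{Mis-3}), which is well known to be NP-complete. Since $\Omega_v=0$, the Hamiltonian $H_\Ryd$ is diagonal in the computational basis, so a configuration $\{n_v\in\{0,1\}\}$ is a classical witness whose energy can be evaluated in polynomial time to constant precision; this gives NP containment, provided the positions $\vec{x}_v$ are encoded by rationals of polynomially bounded bit length.

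For hardness, I would take an instance $G=(V,E)$ of \textsc{Mis-3} and exhibit a polynomial-time construction of atomic positions $\{\vec{x}_v\}$, detunings $\{\Delta_v\}$, and a threshold $E^*$ such that the Rydberg ground state energy is at most $E^*$ iff $G$ has an independent set of size at least $k$. The natural idea is to identify atoms with vertices and use the blockade radius $r_B$ to enforce the edge constraints: if the embedding places two atoms within $r_B$ exactly when $(v,w)\in E$, then choosing $\Delta_v=\Delta>0$ uniformly makes each excitation worth $-\Delta$, the hard-core blockade forbids excited neighbors, and the ground state picks out a maximum independent set. Since $G$ is planar and $3$-regular(-bounded), I would first fix a planar straight-line embedding on a polynomial-size grid (e.g.\ via Schnyder / Rosenstiehl--Tarjan), and then compile it into a Rydberg layout.

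The main obstacle is twofold and is where the bulk of the work lies. First, in a direct vertex-for-vertex embedding the geometric distance of an edge is not a free parameter, so some edges will be much longer than $r_B$; I would resolve this by introducing a \emph{wire gadget} --- a chain of auxiliary atoms spaced just inside $r_B$ along each edge of the drawing --- whose two degenerate ground-state patterns (alternating $\ket{1}\ket{0}\ket{1}\dots$ starting from either end) faithfully propagate the logical state of one endpoint to the other, with endpoint atoms playing the role of the original vertex. Choosing the chain length appropriately (odd vs.\ even) lets the wire encode either the constraint $n_v n_w=0$ (edge) or simply transport a variable; by counting the wire contribution to the energy exactly, one can fold it into the threshold $E^*$. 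Second, unlike an idealized hard-disk model, the $1/x^6$ tail means that non-adjacent atoms still interact weakly and, more dangerously, that accumulated tails from many atoms could compete with the blockade-gap and flip the optimal configuration. I would control this by fixing a small separation $\epsilon$ below $r_B$ for "connected" pairs and a sufficiently large separation above $r_B$ for "non-connected" pairs, and then choosing $\Delta$ and the wire spacing polynomially in $N$ so that the combined tail from all $O(N)$ atoms is strictly smaller than the energy cost of any single blockade violation. A careful but elementary summation (using that the atoms sit on a coarse grid, so tails are bounded by an absolutely convergent lattice sum) shows such a scaling exists with all distances rational and of polynomial bit length, completing the reduction.

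Finally, I would verify that the threshold $E^*=-k\Delta+E_{\text{wires}}$ (where $E_{\text{wires}}$ is the explicit ground-state energy contribution from the wire gadgets, corrected for endpoint occupancies) is met iff $G$ admits an independent set of size $\ge k$, which together with NP-membership yields the theorem.
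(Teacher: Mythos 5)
Your outline matches the paper's skeleton --- NP membership via the classical witness, and a reduction from \textsc{Mis} on planar graphs of maximum degree~3 through a grid embedding with chains of ancillary atoms acting as wires --- and your treatment of the \emph{far} tails (an absolutely convergent lattice sum made negligible by scaling the number of wire atoms polynomially in the instance size) is exactly the paper's ``separation of length scales'' step. However, there is a genuine gap in how you handle the interactions just beyond the blockade radius. You propose a \emph{uniform} detuning $\Delta$ and argue that all beyond-blockade contributions can be made ``strictly smaller than the energy cost of any single blockade violation.'' That is the wrong comparison, and the bound cannot be achieved for the dangerous terms. At a $90^\circ$ corner or a degree-3 junction, two atoms that are \emph{not} adjacent on the unit-disk graph sit at distance $\sqrt{2}\,d$, where $d$ is the nearest-neighbor spacing, so they interact with strength $\mathcal{U}/8$ with $\mathcal{U}=C/d^6$. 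Since the blockade and maximality constraints force $\Delta$ to lie in a window of the form $c_1\mathcal{U}\le\Delta\le c_2\mathcal{U}$ (in the paper, roughly $[0.49\,\mathcal{U},\,\mathcal{U}]$), these spurious interactions are a \emph{fixed fraction} of $\Delta$, independent of any rescaling of the lattice or of $\Delta$ itself; they do not shrink as you refine the wires. Their cumulative effect differs between competing independent sets (a configuration with one fewer excited atom can avoid several such diagonal penalties), so they can reorder the energies of independent sets and make the ground state encode something other than the MIS --- the paper exhibits an explicit counterexample of precisely this failure mode for its finite-range toy model.

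The missing idea is that the theorem allows the detunings $\Delta_v$ to be site-dependent, and this freedom is essential: the paper introduces a carefully constructed \emph{inhomogeneous} detuning pattern in a neighborhood of every corner, junction, open end, and irregular vertex, chosen so that (i) the extra beyond-blockade interaction energy is exactly compensated, (ii) any domain wall is energetically pushed out of these neighborhoods, leaving only two ordered configurations per special region, and (iii) the resulting effective pseudo-spin model has homogeneous effective detuning and interaction satisfying $0<\Delta^{\mathrm{eff}}<U^{\mathrm{eff}}$, i.e.\ is literally an \textsc{Mis} Hamiltonian on a subdivision of the original planar graph. Without this (or an equivalent device), your reduction does not go through: the threshold test $E^*=-k\Delta+E_{\mathrm{wires}}$ would misclassify instances whose corner/junction geometry penalizes the true MIS configuration. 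The rest of your plan (grid embedding, even-length wires, rational coordinates of polynomial bit length, folding the wire energy into the threshold) is sound and aligns with the paper.
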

\noindent
This key result elaborates on the arguments outlined in our recent work of Ref.~\cite{MISPRL}. Specifically, we present a detailed  analysis that is based on the analogy between the Rydberg Hamiltonian and the maximum independent set problem, \textsc{Mis}. \Mis is a paradigmatic optimization problem, addressing the task of finding the largest disconnected subset of vertices on a given graph $G=(V,E)$ with vertices $V$ and edges $E$ \cite{Garey:1979wr}. Deciding whether a graph has a maximum independent set (MIS) of size larger than a given integer, $a$, is a famous NP-complete problem. 
In general, even approximate optimization is NP-hard  \cite{Garey:1979wr}.
Importantly, \Mis can be formulated as an energy minimization problem, by associating a spin-1/2 with each vertex $v\in V$ and considering the Hamiltonian
\begin{align}\label{Seq:HMIS}
H_{P}=\sum_{v\in V} -\Delta \,n_v+\sum_{\{v,w\}\in E} U n_v n_w.
\end{align}
For $\Delta>0$, $H_P$ favors spins to be in state $\ket{1}$. However, if $U>\Delta$,  it is energetically unfavorable for two spins, $u$ and $v$, to be simultaneously in state $\ket{1}$ if they are connected by an edge ${u,v}\in E$.
Thus, each ground state of $H_P$ represents a configuration where the spins that correspond to vertices in the maximum independent set are in state $\ket{1}$, and all other spins are in state $\ket{0}$.
We refer to such a state as MIS-state, and to $H_P$ as MIS-Hamiltonian. 
The NP-complete decision problem of \Mis becomes deciding whether the ground state energy of $H_P$ is lower than $-a\Delta$.

The MIS-Hamiltonian $H_P$ clearly shares some features with the Rydberg Hamiltonian $H_\Ryd$ in the classical limit of $\Omega_v=0$. The main difference lies in the achievable connectivity of the pairwise interaction, in particular, when arbitrary graphs are allowed in $H_P$. We therefore consider a special, restricted class of graphs, that are most closely related to the Rydberg blockade mechanism. These so-called unit disk (UD) graphs
are constructed when vertices can be assigned coordinates in a plane, and only pairs of vertices that are within a unit distance, $r$, are connected by an edge.
Thus the unit distance $r$ plays an analogous role to the Rydberg blockade radius $r_B$ in  $H_\Ryd$. 
Remarkably, \Mis is NP-complete even when restricted to such unit disk graphs \cite{Clark:1990dz}. The main result of this paper is a generalization of this property to the Rydberg Hamiltonian \eqref{eq:HRYD}: while in contrast to UD graphs, the Rydberg interactions extend beyond the unit distance and are infinitely ranged, we show  that finding the ground state of Rydberg Hamiltonian constitutes an NP-complete problem. 

The central idea of this work is to show that by choosing atom positions in two dimensions and laser parameters, the low energy sector of the Rydberg Hamiltonian $H_\Ryd$ reduces to the (NP-complete) MIS-Hamiltonian $H_P$ on planar graphs with maximum degree 3 \cite{doi:10.1137/0132071}.
This formal reduction is made possible owing to two key ideas.
First, we harness the formation of antiferromagnetic order in the ground state of (quasi) 1D spin chains at positive detuning, due to the Rydberg blockade mechanism. This allows us to effectively transport the blockade constraint between distant spins. 
Second, we introduce a detuning pattern, $\{\Delta_v\}$, that eliminates the effect of undesired long-range interactions without altering the ground state spin configurations.  
Based on this reduction, we provide a constructive prescription to efficiently encode NP-complete problems in the ground state of arrays of trapped neutral atoms \footnote{We note our construction constitutes a formal proof of Theorem~\ref{thm:main}, but should not necessarily be understood as a recipe for experimental implementation.}. 
This leads to the proof of Theorem~\ref{thm:main}, where we show finding the ground state energy of interacting Rydberg atoms in 2D array is NP-hard (and NP-complete when $\Omega_v=0$).

Our results imply that quantum optimization algorithms can be tested on computationally hard problems with coherent quantum optimizers accessible in near term experiments  with minimal resources and experimental overhead \cite{MISPRL}.

Our manuscript is organized as follows. 
In Sec.~\ref{Sec:MIS} we review the maximum independent set problem on unit disk graphs. Sec.~\ref{Sec:ToyModel} introduces a simple toy model with finite range interactions that can be interpreted as an approximation to the full Rydberg Hamiltonian. This illustrates the main ideas employed in the discussion of the full Rydberg Hamiltonian in Sec.~\ref{Sec:NPRYD}, but avoids many of the technical subtleties associated with infinitely ranged interactions. In Sec.~\ref{Sec:NPRYD} we address the latter problem and show that the Rydberg problem is NP-complete. 

\section{Maximum independent sets for unit disk graphs}\label{Sec:MIS}
\label{Sec:NPCompletenessUD}

In this section we discuss \Mis problem on unit disk graphs.
The problem can be formulated as minimizing the energy of
\begin{align}
\label{Seq:UDHam}
\begin{split}
H_{\rm UD} &= \sum_{v\in V} -\Delta n_v+\sum_{w> v} V_{\rm UD}(|\vec{x}_v-\vec{x}_w|)n_v n_w,\\
\text{where} &\quad V_{\rm UD}(x) = \begin{cases}
U, & x\le r \\
0, & x > r
\end{cases}
\end{split}
\end{align}
which is a subclass of $H_P$ \eqref{Seq:HMIS}.
Clark et al.\ \cite{Clark:1990dz} has proven NP-completeness of this problem, by reducing it from \Mis on planar graphs of maximum degree 3. 
Since our analysis of the computational complexity associated with Rydberg Hamiltonian \eqref{eq:HRYD} in Sec.~\ref{Sec:NPRYD} is based on a similar reduction, we find it instructive to review the following theorem and its proof.

\begin{theorem}[Clark, Colbourn, and Johnson, 1990 \cite{Clark:1990dz}]
\label{thm:clark}
\Mis on unit disk graphs is NP-complete. 
\end{theorem}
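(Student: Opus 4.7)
The plan is to establish the two parts of NP-completeness separately. Membership in NP is immediate: an alleged independent set $S$ with $|S|\ge a$ is a polynomial-size certificate, verified by computing the $O(|S|^2)$ pairwise distances $\|\vec{x}_u-\vec{x}_v\|$ and checking they all exceed $r$. The heart of the theorem is the hardness reduction, for which I would reduce from \Mis on planar graphs of maximum degree 3 (already NP-complete, per \cite{doi:10.1137/0132071}), by transforming any such instance $G=(V,E)$ into a unit disk graph $G'$ whose MIS size equals $|\mathrm{MIS}(G)|$ plus an explicitly computable constant.

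The construction would proceed in two steps. First, I would invoke a polynomial-time rectilinear grid embedding (in the style of Valiant) that places each vertex of $G$ at a distinct grid point and draws each edge as an axis-aligned polygonal path of polynomial length, without crossings — bounded degree and planarity make this possible. Second, I would replace each edge $\{u,v\}\in E$ by a ``wire'' of collinear vertices placed at unit spacing $r$ along its drawn route, with the grid refined in advance so that consecutive wire vertices lie at distance exactly $r$, while any two non-consecutive vertices — whether on the same wire, on different wires, or among the original vertex positions — lie strictly beyond $r$. This ensures that the unit-disk structure of $G'$ consists precisely of the wires, with no spurious adjacencies.

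The wire analysis rests on a short combinatorial observation: a wire with $2k$ internal vertices between $u$ and $v$ admits an internal independent set of size $k$ whenever at most one of $u,v$ is selected, but only of size $k-1$ when both are selected. Thus an even-length wire faithfully simulates the edge $\{u,v\}$, and a simple exchange argument shows that the MIS of $G'$ is always attained on configurations whose selected $V$-vertices form an independent set of $G$, yielding $|\mathrm{MIS}(G')|=|\mathrm{MIS}(G)|+\sum_{e\in E}k_e$. Translating a threshold $a$ for $G$ into the threshold $a+\sum_e k_e$ for $G'$ then gives the reduction.

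The main obstacle I expect is geometric bookkeeping: I must ensure that \emph{every} wire can be made to have the required even internal length simultaneously, and that wires running near one another (or near vertex sites of degree 3) never accidentally come within unit distance. Both issues are handled together by sufficient initial grid refinement, supplemented by a local ``bump'' gadget — a short collinear detour that inserts two extra internal vertices and thereby flips the parity of any wire whose natural rectilinear route has the wrong number of chain points. Once these placements are verified to produce no unintended edges, the equivalence of MIS values follows from the wire lemma above, and the whole construction is clearly carried out in polynomial time, completing the proof.
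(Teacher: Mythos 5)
Your overall strategy is exactly the one the paper (following Clark, Colbourn, and Johnson) uses: membership in NP via the independent set itself as a certificate, reduction from \Mis on planar graphs of maximum degree 3, a polynomial-area rectilinear grid embedding, and subdivision of each edge into a ``wire'' carrying an even number $2k_{u,v}$ of ancillary vertices, so that the maximum independent set size shifts by exactly $\sum_{\{u,v\}\in\mc{E}}k_{u,v}$. Your wire lemma (an even wire contributes $k$ independent internal vertices unless both endpoints are selected, in which case only $k-1$) is precisely the combinatorial content of step (3) of the paper's proof, and your threshold translation $a\mapsto a+\sum_e k_e$ matches the paper's $a'$.

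The one step that would fail as written is your parity-correcting ``bump'' gadget. A rectilinear detour necessarily changes the length of the drawn route by an \emph{even} amount (whatever goes out must come back), so at fixed unit spacing it inserts an even number of extra wire vertices; in particular, inserting two extra internal vertices, as you propose, leaves the parity of the internal vertex count unchanged rather than flipping it. Since the parity of each wire is exactly what makes the edge gadget faithful, this is the one genuinely delicate point of the construction and needs a different mechanism. The fix the paper adopts in its vertex-arrangement prescription is to perturb the \emph{spacing} locally instead of the route: on one segment of an offending edge, replace $4\phi+2$ equally spaced ancillary vertices by $4\phi+1$ vertices at the slightly larger spacing $D=d+\frac{d}{4\phi}$, and set the unit disk radius to $r=D+0^+$ so that both spacings $d$ and $D$ remain within the radius while no new adjacencies appear. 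This changes the ancillary count by one --- an odd number --- and hence corrects the parity. With that replacement (and your already-stated care that distinct wires and degree-3 sites stay farther than $r$ apart), your argument goes through.
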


\begin{proof}
(1) \Mis on planar graphs with maximum degree 3 is NP-complete.
(2) A planar graph $\mc{G}=(\mc{V},\mc{E})$, with vertices $\mc{V}$ and edges $\mc{E}$, with maximum degree 3 can be embedded in the plane using $\mc{O}(|\mc{V}|)$ area in such a way that its vertices are at integer coordinates, and its edges are drawn by joining line segments on the grid lines $x=i\times g$ or $y=j\times g$, for integers $i$ and $j$, and grid spacing $g$.
(3) One can replace each edge $\{u, v\}$ of $\mc{G}$ by a path having an even number, $2k_{u,v}$, of ancillary vertices, in such a way that a UD graph, $G=(V,E)$ with vertices $V$ and edges $E$, can be constructed. For an explicit construction, see below. It is straightforward to verify that $\mc{G}$ has an independent set $\mc{S}\subset\mc{V}$ such that $|\mc{S}|\leq a$ if and only if $G$ has an independent set $S\subset V$ such that $|S|\leq a' \equiv a+\sum_{\{u,v\}\in \mc{E}} k_{u,v}$.
\end{proof}

The above theorem shows that it is NP-complete to decide whether the ground state energy of $H_{\rm UD}$ is lower than $-a'\Delta$.

\begin{figure}[t]
\begin{center}
\includegraphics[width=0.5\textwidth]{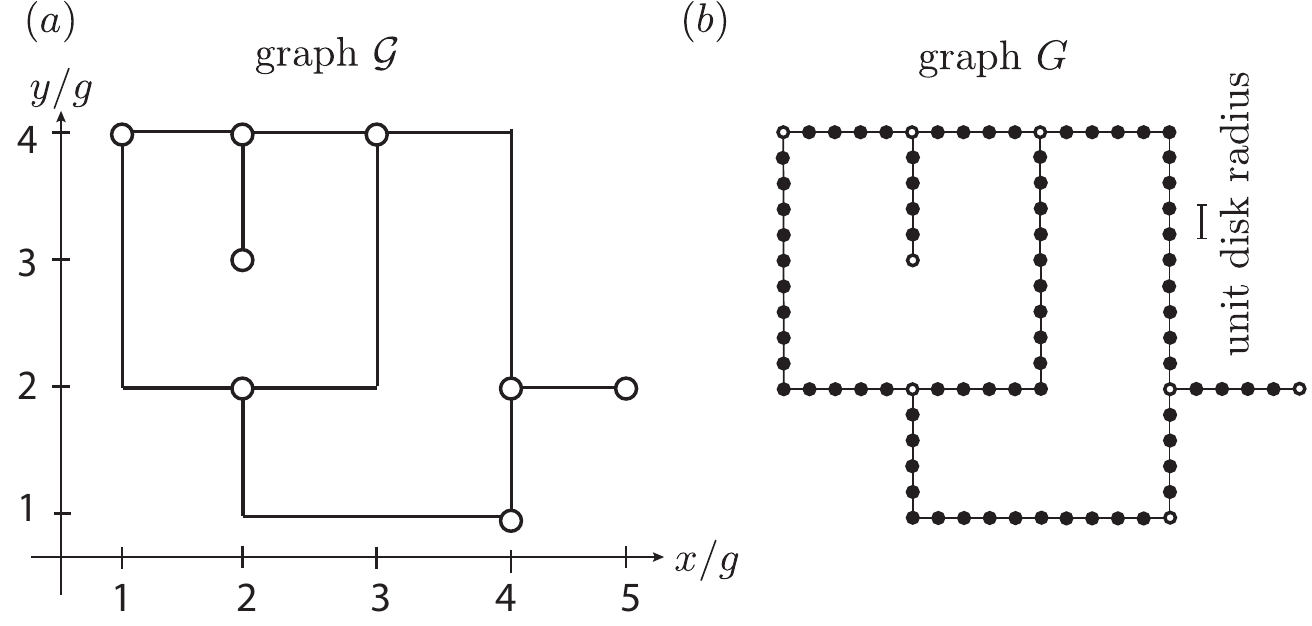}
\caption{Grid representation of a planar graph of maximum degree 3 (a) and a transformation to a unit disk graph (b). The ancillary vertices are indicated as solid black circles.}\label{FigSimple}
\end{center}
\end{figure}

\subsubsection{Vertex arrangement  prescription}\label{Sec:VertexArrangement}  
The transformation in the proof of Theorem~\ref{thm:clark} does not fully determine the actual positions of the ancillary vertices in the 2D plane. We find it convenient to specify a particular arrangement consistent with the requirements of this transformation. This simplifies the discussion, in particular once we consider Rydberg interactions, as it fixes the interaction strength between each pair of spins.
 
Consider an edge $\{u,v\}$ of the graph $\mc{G}$ embedded on the grid. Let us denote the length of this edge by $g\times \ell_{u,v}$ with $\ell_{u,v}$ integer, and $g$ the grid unit length. First, place an ancillary vertex on the $\ell_{u,v}-1$ grid point along the edge, separating the edge in $\ell_{u,v}$ segments of length $g$. We choose an integer $k\geq3$ and place equally spaced $2k$ ancillary vertices along each segment, dividing it into $2k+1$ pieces of equal length $d=g/(2k+1)$. 
If $\ell_{u,v}$ is even, we choose one segment and replace the $4\phi+2$ ancillary vertices close to the center of this segment with $4\phi+1$ ancillary vertices, that are equally spaced by a distance $D=d+\frac{d}{4\phi}$, for some integer $\phi$ to be determined.
We refer to such segments as ``irregular segments'', and to the vertex at the center of the irregular segment as irregular vertex. These exceptions are made to ensure that the total number of ancillary vertices along each edge $\{u,v\}\in\E$, $2k_{u,v}$, is even.
Following this arrangement, the nearest-neighbor distance of the ancillary vertices is either $d$ or $D$.
Setting the unit disk radius to $r=D+0^+$ produces the unit disk graph $G$. The positions of the vertices are labelled by $\vec{x}_v$. Note that this arrangement depends on the freely chosen parameters $k$ and $\phi$.\\

\subsubsection{Structure of the maximum independent set}\label{MIS_Structure}
To prepare for the discussion in the following sections, we note a few properties of the maximum independent set on the types of unit disk graphs constructed in this way. First, note that the maximum independent set on $G$ is in general degenerate, even if the maximum independent set on $\G$ is unique. We are in the following interested in the properties of a particular set of MIS-states on $G$. Specifically, we consider such MIS-states on $G$, $\ket{\psi_G}$, that coincide with MIS-states on $\G$, $\ket{\psi_\G}$, on the vertices $\mc{V}$. To see that such states always exist, we can simply construct $\ket{\psi_{G}}$ from $\ket{\psi_{\G}}$ as follows. 
Consider the state of two spins corresponding to two vertices, $u,v\in \mc{V}$. If $u$ is in the state $\ket{0}$ and $v$ is in state $\ket{1}$ (or vice versa) then we place the $2k_{u,v}$ ancillary vertices on the edge connecting $u$ and $v$ in an (antiferromagnetically) ordered state, where the spins are alternating in states $\ket{1}$ and $\ket{0}$. 
In the other case, if $u$ and $v$ are both in states $\ket{0}$, we place the ancillary vertices in an analogously ordered state, however in this case we need to introduce a domain wall, that is, an instance where two neighboring spins are both in the state $\ket{0}$. The position of this domain wall along the edge is clearly irrelevant. 
In both cases half of the $2k_{u,v}$ ancillary vertices along the edge are in state $\ket{1}$. Therefore, by the above theorem, the state constructed from $\ket{\psi_\G}$ by applying this process on all edges of $\G$, is a MIS-state on $G$.

Below we will be particularly interested in the structure of these MIS-states around points where edges meet under a 90$^\circ$ angle. Such points are either junctions, where 3 edges meet at a vertex, or corners (see Fig.~\ref{FigSimple}). Importantly we note, that there is a MIS-state, $\ket{\psi_G}$, such that the spins close to each corner or junction are ordered. More precisely, there exists a maximum independent set, such that for every corner and junction, all vertices within a distance $\delta<g/4$ are in one of the two possible ordered configurations with no domain wall.
This simply follows from the above discussion by noting that the position of any domain wall along an edge $\mc{E}$ is irrelevant. In particular any domain wall can be moved along an edge $\mc{E}$ such that its distance to any vertex on a grid point is larger than $g/4$. The possibility to move domain walls is also exploited in the discussion of the full Rydberg problem in Sec.~\ref{Sec:NPRYD}.

\section{Toy model}\label{Sec:ToyModel}

In order to illustrate the conceptual ideas in generalizing the above reduction to the case of Rydberg interactions, we first consider a simple toy model. In particular, we consider a Hamiltonian similar to $H_\UD$, the MIS-Hamiltonian for UD graphs, but introduce interactions beyond the unit disk radius. Similar to the situation in the Rydberg system, these additional interactions cause energy shifts that can result in a change of the ground state, thus invalidating the encoding of the $\textsc{Mis}$. The key idea to resolve this issue is to use local detunings that compensate for the additional interactions.

Specifically, we consider the model given by
\begin{align}\label{Seq:HTOY}
H_{\rm Toy}=\sum_{v\in V} -\Delta_v \,n_v+\sum_{w> v} V_{\rm{Toy}}(|\vec{x}_w-\vec{x}_v|) n_v n_w.
\end{align}
with interactions 
\begin{align}
V_{\rm Toy}(x)=\left\{\begin{array}{cl}U, & x\leq r,\\ W, & r<x\leq R \\0, & x>R, \end{array}\right.
\end{align}
where $W<U$.  
Clearly for $W=0$ and $\Delta_v=\Delta$, $H_\Toy$ reduces to the Hamiltonian $H_{\rm UD}$  \eqref{Seq:UDHam}.
For $W>0$ it includes interactions beyond the unit disk radius, $r$, and can thus be considered as a first approximation to the Rydberg Hamiltonian \cite{Samajdar:2018cg}. 

Let us consider a spin arrangement as in Sec.~\ref{Sec:VertexArrangement} corresponding to a unit disk graph, $G$, and the case $\sqrt{2}r<R<2r$. In this case most spins interact only with their neighbors on $G$, with the only exception being spins that are close to corners or junctions. There, due to the geometric arrangement, spins are interacting (with strength $W>0$) even though they are not neighboring on $G$. These interactions are relevant and can potentially change the ground state, as compared to the MIS-Hamiltonian, $H_P$. For a simple example of such a situation see Fig.~\ref{FigExampleTOY}(a).

\begin{figure}[t]
\begin{center}
\includegraphics[width=0.5\textwidth]{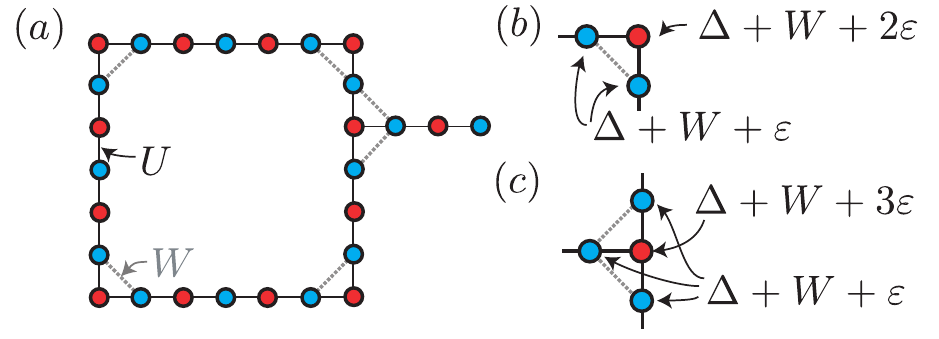}
\caption{(a) Example of a vertex arrangement corresponding to a unit disk graph, where the ground state of the toy model $H_{\Toy}$ \eqref{Seq:HTOY} does not correspond to the maximum independent set. The maximum independent set on this graph is indicated by the blue vertices. The red vertices indicate another independent set whose size is smaller than the size of the maximum independent set by one vertex. Clearly, if $W$ is too large, the additional interactions on the diagonals close to corners and junctions disfavor the MIS-state; instead, the state, where the spins on the red sites are in state $\ket{1}$, becomes the ground state. This problem can be resolved by changing the detuning locally at the problematic structures as indicated in (b) and (c) (see main text for discussion).}\label{FigExampleTOY}
\end{center}
\end{figure}

As mentioned above, we want to find a detuning patten, $\Delta_v$, such that the MIS-state of $H_\UD$ remains the ground state of $H_{\rm Toy}$, even at finite $W$. 
We first note that for the relevant spin arrangements, the interactions of the $H_{\UD}$ and  $H_\Toy$ differ only around corners and junctions. Thus we set $\Delta_v=\Delta$ everywhere, except at these structures and consider them individually and separately.

First, let us consider a corner vertex, as in Fig.~\ref{FigExampleTOY}(b). As discussed in Sec.~\ref{MIS_Structure}, there exists at least one MIS-state, $\ket{\psi_G}$, such that the corner spin and its two neighbors are in one of the two ordered configurations: either the corner spin is in state $\ket{0}$ and its two neighbors are in state $\ket{1}$ ($\ket{011}$), or the corner vertex is in state $\ket{1}$ and its two neighbors are in state $\ket{0}$ ($\ket{100}$). The idea is to choose detunings
for these three spins in such a way that only these two configurations are relevant for the ground state of $H_\Toy$. That is, we want to ensure that the energy of all possible spin configurations can always be lowered by arranging the spins on the corners in one of the two ordered states. This can be achieved if we choose the detuning of the corner vertex (labeled by $C$) to be $\Delta_C=\Delta+W+2\varepsilon$, and the detuning of its two neighbors to be $\Delta_N=\Delta+W+\varepsilon$. Here $\varepsilon>0$ can be chosen freely (up to the trivial constraint $\Delta_v<U$).
Importantly, this choice of the detuning restores the energy difference between the two ordered configurations on the three spins to $\Delta$, corresponding to the additional vertex in state $\ket{1}$.
Hence, up to a trivial constant (and contributions from junctions discussed below), $H_{\rm Toy}$ and $H_{\rm UD}$ are identical on all states where each corner is in one of the above ordered configurations.
This includes in particular one ground state of $H_{\rm UD}$, i.e. a MIS-state. All states that are not of this type have higher energy with respect to $H_{\rm Toy}$ by construction.

Junctions can be treated analogously [see Fig.~\ref{FigExampleTOY}(c)]. Again, there exists at least one MIS such that the central vertex is in the state $\ket{0}$ and its three neighbors are in the state $\ket{1}$, or the central vertex is in the state $\ket{1}$ and its three neighbors are in the state $\ket{0}$. We choose the detuning of the degree 3 vertex (denoted  $\Delta_J$) to be $\Delta_{J}=\Delta+W+3\varepsilon$, and the detuning of its three neighbors to be $\Delta_{N'}=\Delta+W+\varepsilon$. Again, this choice renders one of the two ordered states on the junction energetically more favorable than any other state, and restores their energy difference to exactly $2\Delta$.

In summary, the actions of $H_{\rm UD}$ and $H_{\rm Toy}$ are identical for at least one MIS-state, for the above choice of the detuning pattern. In addition, we ensure that our choice of detunings cannot lower the energy of any other configurations. Therefore, a ground state of $H_{\rm Toy}$ is a ground state of $H_\UD$, encoding an MIS configuration on the corresponding unit disk graph.

\section{NP-comleteness of the Rydberg problem}\label{Sec:NPRYD}
We now generalize the arguments presented above for the toy model to the case of the Rydberg Hamiltonian, allowing us to prove Theorem~\ref{thm:main}.
While the main idea is similar, the infinite ranged Rydberg interactions require a more careful discussion.

We first show in Sec.~\ref{Sec:separation_of_scales} that one can separate length and interaction scales. This is possible since the Rydberg interactions decay sufficiently fast, such that we can separate the interactions between spins that are close, from the interactions between spins that are far apart on the graph $G$. In particular, we show that the interactions between distant spins can be neglected, if $k$ is chosen large enough. This allows us to treat individual substructures of the system separately. 

We then argue in Sec.~\ref{Sec:effectiveModel} that one can map the low energy sector of the Rydberg Hamiltonian to a much simpler effective spin model. To this end we consider clusters of spins and choose specific detuning patterns, such that only two configurations are relevant for each cluster. We show that the resulting, effective pseudo-spins are described by a MIS Hamiltonian. This allows us to encode \Mis on planar graphs with maximum degree 3 to the ground state of the Rydberg Hamiltonian, proving Theorem~\ref{thm:main}.
The remaining Sec.~\ref{Sec:Low_energy}--\ref{Sec:Effective_energy} are dedicated to proving the details of the mapping to the effective model.

We emphasize that the discussion in this section is aimed to obtain a formal proof Theorem~\ref{thm:main}. It should not necessarily be understood as a recipe for experimental implementation of quantum optimization algorithms for \Mis discussed in Ref.~\cite{MISPRL}. 

\subsection{Separation of length scales}\label{Sec:separation_of_scales}
The Rydberg interactions decay as a power law with distance and thus do not define a length scale. Nevertheless, the above vertex arrangement introduces two length scales, the closest distance between two spins, $d$, and the grid length, $g$. They are separated by a factor, $g=d\,(2k+1)$, that can be chosen arbitrarily in the transformation of the planar graph, $\G$, to the unit disk graph, $G$. This allows us to separate interactions between spins that are ``close'' from interactions between spins that are ``distant''. 

The closest spins in the system are separated by a distance $d$ and thus interact with a strength of $\mc{U}=C/d^6$. This defines a convenient unit of energy. Note that $\mc{U}$ depends on the choice of the parameter $k$ specifying the transformation from $\G$ to $G$.

For concreteness, we define two spins to be ``distant'' if their $x$ or $y$ coordinate differs by at least $g$ \footnote{This separation is of course arbitrary. Any finite value in units of $g$, that is, independent of $k$, works as well.}. 
It is easy to see that the interaction energy of a single spin with all distant spins is upper bounded by 
\begin{align}
E_{\rm dist}&=C\left(\sum_{j=1}^{\infty} \sum_{i=1}^{\infty}\frac{8}{(d^2i^2+g^2j^2)^3}+\sum_{i=0}^{\infty} \frac{4}{(id+g)^6}\right).\no
\end{align}
This bound can be obtained by considering a system where we place a spin in state $\ket{1}$ on each possible spin position in the 2D plane (i.e. along all grid lines).  $E_{\rm dist}$ is simply the interaction energy of a single spin in such a system with all other spins that are distant in the above sense. This is clearly an upper bound to the maximum interaction energy of a single spin with all distant spins on arbitrary graphs, that itself can be upper bounded straightforwardly by
\begin{align}\label{Seq:distantVert}
E_{\rm dist}&\leq \mc{U}\left(\frac{3 \pi  \zeta (5)}{2}+\frac{4}{5}+4\lr{\frac{d}{g}}\right)\lr{\frac{d}{g}}^5,
\end{align}
where $\zeta(x)$ is the Riemann zeta function.
Since $d/g=1/(2k+1)$, we find that  the interaction energy of a spins with all distant spins scales as $E_{\rm dist}=\mc{O}(\mc{U}/k^5)$. Note that the total number of spins scales as $|V|\sim \mc{O}(k|\mc{V}|)$, as the grid representation of the original planar graph $\G$ in the plane can be done with $\mc{O}(|\mc{V}|)$ area.
Thus the contribution of interactions between all pairs of spins that are distant with respect to each other is bounded by $E_{\rm dist}|V|\sim \mc{O}(\mc{U}|\mc{V}|/k^4)$.
We can therefore always choose $k$ large enough such that these contributions can be neglected. Note that the required $k$ scales only polynomially with the problem size $|\mc{V}|$.

\subsection{Effective spin model}\label{Sec:effectiveModel} 
\begin{figure*}[t]
\begin{center}
\includegraphics[width=1\textwidth]{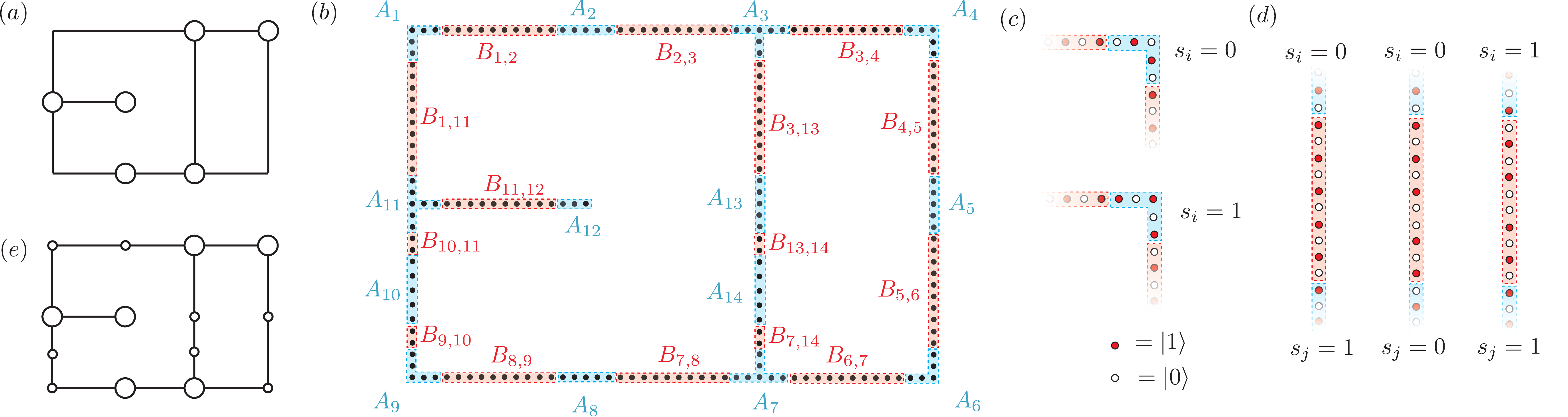}
\caption{Transformation to the effective spin model. (a) Example of a planar graph with maximum degree 3, $\G$ embedded on a grid. (b) Corresponding spin arrangement following the prescription given in Sec.~\ref{Sec:VertexArrangement}, with $k=7$ and $\phi=1$. The regions $A_i$, corresponding to a special vertex and its $2q$ neighbors on each leg are highlighted in blue (here $q=1$). We call regions of the type $A_1$, $A_4$, $A_6$ and $A_9$ corners, $A_3$, $A_7$, and $A_{11}$ junctions, $A_{12}$ open legs, $A_2$, $A_{5}$, $A_8$, $A_{10}$, $A_{13}$ and $A_{14}$ as special vertices on straight segments. Note that $A_{10}$ and $A_{14}$ are irregular regions. In the effective spin model, each region $A_i$ is represented by a pseudo-spin $s_i$ that interacts with neighboring pseudo-spins. The two pseudo-spin states $s_i=0,1$ correspond to the two ordered configurations in $A_i$ (c). The state of the segments connecting two pseudo-spins $i$ and $j$ are determined by the pseudo-spin states $s_i$ and $s_j$ (up to the location of the domain wall) (d). The graph $\G^{\rm eff}$ corresponding to the effective spin model is depicted in (e). Note that this $\G^{\rm eff}$ is obtained from $\G$ by replacing each edge by a string of even ancillary vertices (small circles). Thus the MIS on $\mc{G}$ differs from the MIS on $\G^{\rm eff}$ by half the number of such ancillary vertices.  }\label{Fig_separation}
\end{center}
\end{figure*}

Similar to the example of the toy model in Sec.~\ref{Sec:ToyModel}, interactions beyond the unit disk radius are problematic, and they are most prominent close to vertices of degree 3 or corners. To address these complications, we make use of the separation of length and interaction scales (Sec.~\ref{Sec:separation_of_scales}) to isolate these structures and analyze them individually.
For the graphs discussed in this work whose vertex arrangement is described in Sec.~\ref{Sec:VertexArrangement}, we define various  ``special'' vertices: these are vertices that either have integer grid coordinates or are irregular vertices. 
Special vertices thus include all vertices $\mc{V}$ of $\mc{G}$, but also a subset of vertices introduced to transform $\G$ to the UD graph $G$. 
For each special vertex, $i$, we define a set of vertices, $A_i$, that includes $i$ and its $2q$ neighbors on each leg (see Fig.~\ref{Fig_separation}). Here we chose $q=\lfloor k/8\rfloor$ \footnote{Any choice of $q$ works, as long as we choose $2\phi<2q\leq k/4$ and $q\sim \mc{O}(k)$}. We further define sets $B_{i,j}$ consisting of the vertices along the path connecting $A_i$ and $A_j$. Note that by construction $B_{i,j}$ contains an even number of vertices. 

Due to the separation of length and interaction scales (Sec.~\ref{Sec:separation_of_scales}), interactions between two spins, $u$ and $v$, are only relevant if they belong to the same set, $u,v\in A_i$ (or $u,v\in B_{i,j}$), or if they belong to adjacent sets, $u\in A_i$ and $v\in B_{i,j}$. All other interaction terms will give contributions that vanish like $\mc{O}(\mc{U}|\mc{V}|/k^4)$. The total energy can thus be written as 
\begin{align}
E=\!\sum_{i}E_{A_i}+\sum_{\mean{\mean{i,j}}}\lr{E_{A_i,B_{i,j}}+\frac{1}{2}E_{B_{i,j}}}\!+\!\mc{O}(\mc{U}|\mc{V}|/k^4).
\end{align}
Here, $E_{X}=\sum_{v\in X}-\Delta_v n_v+\sum_{w> v\in X}V_{\rm Ryd}(|\vec{x}_w-\vec{x}_v|)n_w n_v $ gives the energy of the system when only spins in $X$ are taken into account, and 
$E_{X,Y}=\sum_{u\in X}\sum_{v\in Y}V_{\rm Ryd}(|\vec{x}_u-\vec{x}_v|)n_un_v$
quantifies the interaction energy between spins in two regions $X$ and $Y$. 
The sum $\sum_{\mean{\mean{i,j}}}=\sum_i\sum_{j\in\mc{N}(i)}$ runs over $i$ and $j$, such that $A_i$ is connected to $A_j$, by a non-empty set $B_{i,j}$, and the factor $1/2$ compensates for double counting.   

Below, in Secs.~\ref{Sec:Low_energy} and \ref{Sec:reduction}, we will show that, with an appropriate choice of the detunings, $\Delta_v$, only a few configurations of spins in regions $A_i$ and $B_{i,j}$ are relevant to construct the ground state of the entire system. This will be a crucial point in our discussion.

In particular, we show in Sec.~\ref{Sec:reduction} that only two configurations of spins in the regions $A_i$ are relevant [see also Fig.~\ref{Fig_separation}(c)]. These two configurations, denoted by $s_i=0$ and $s_i=1$, correspond to states where the special spin, $i$, is in state $\ket{s_i}$ and all other spins in $A_i$ are in the corresponding perfectly ordered state. For each region $A_i$ we can thus define a pseudo-spin $s_i$, and calculate $E_{A_i}$ for both pseudo-spin states (denoted $E_{A_i}^{s_i}$). Note that the number of spins in $A_i$ that are in state $\ket{1}$ is directly determined by the pseudo-spin state $s_i$, as it is given  by $s_i+m_i q$ (where $m$ denotes the degree of the special vertex $i$).

Second, we show in Sec.~\ref{Sec:Low_energy} that only four configurations of spins in each region $B_{i,j}$ are relevant [see also Fig.~\ref{Fig_separation}(d)]. Moreover, these configurations are completely determined by the pseudo-spin state $s_i$ and $s_j$. If $s_i=1$ and $s_j=0$ (or vice versa) the total energy is minimized if $B_{i,j}$ is in the perfectly ordered state. If $s_i=s_j=0$ or $s_i=s_j=1$, the lowest energy state is also ordered but with a single domain wall in the center of the region $B_{i,j}$. We denote the energy $E_{B_{i,j}}$ for these configurations by $E_{B_{i,j}}^{s_i,s_j}$, and therefore write
\begin{align}
E_{B_{i,j}}&=(s_{i}(1-s_j)+(1-s_i)s_j)E_{B_{i,j}}^{1,0}\no\\&+(1-s_i)(1-s_j)E_{B_{i,j}}^{0,0}+s_is_jE_{B_{i,j}}^{1,1},
\end{align}
where we note $E_{B_{i,j}}^{1,0}=E_{B_{i,j}}^{0,1}$.
Note that the number of spins in state $\ket{1}$ in $B_{i,j}$ is given by $b_{i,j}-s_is_j$ (where $2b_{i,j}$ gives the number of spins in $B_{i,j}$).
Moreover, it is easy to see that the interaction energy $E_{A_i,B_{i,j}}$ is constant for all of the above combinations of relevant states [up to $\mc{O}(\mc{U}/k^5)$]. 

The total energy in the relevant configuration sector containing the ground state of $H_{\rm Ryd}$
is therefore given by an effective spin model for the pseudo-spins [analogous to \eqref{Seq:HMIS} and \eqref{Seq:UDHam}]
\begin{align}\label{eq:effectiveMIS}
E=&\sum_{i}-\Delta^{\rm eff}_is_i+\sum_{{\mean {i,j}}}s_{i}s_jU_{i,j}^{\rm eff} + \xi + O(\mc{U}|\mc{V}|/k^4),
\end{align}
where the sum $\sum_{\mean{i,j}}$ runs over neighboring pairs pseudo-spins (without double-counting), and we introduced effective detunings $\Delta^{\rm eff}_i$ and pseudo-spin interactions $U^{\rm eff}_{i,j}$ given by
\begin{align}
\Delta^{\rm eff}_{i}&=E_{A_i}^{0}-E_{A_i}^1-\sum_{j\in\mc{N}(i)}(E_{B_{i,j}}^{1,0}-E_{B_{i,j}}^{0,0}),\label{Def_eff_detuning}\\
U_{i,j}^{\rm eff}&=E_{B_{i,j}}^{1,1}+E_{B_{i,j}}^{0,0}-2E_{B_{i,j}}^{1,0},\label{Def_eff_interactions}\\
\xi &= \sum_i E_{A_i}^0 + \sum_{\mean{\mean{i,j}}} ( E_{A_i,B_{i,j}} + \frac12 E_{B_{i,j}}^{0,0}).
\end{align}
Importantly, we can choose detuning patterns, $\Delta_v$, such that the effective pseudo-spin detunings and interactions are homogeneous, $\Delta_{i}^{\rm eff}=\Delta^{\rm eff}$ and $U_{i,j}^{\rm eff}=U^{\rm eff}$, and satisfy $0<\Delta^{\rm eff}<U^{\rm eff}$. 
To ensure the long-range correction is bounded by some small constant $\eta \ll \Delta^\eff$, we only need to choose $k\ge \mc{O}((\mc{U}|\mc{V}|/\eta)^{1/4})$.
We can also efficiently compute $\xi$ based on the chosen detuning pattern.

Finally, we can relate this effective spin model \eqref{eq:effectiveMIS} back to the original MIS problem on $\G$.
To this end, note that the effective spin model corresponds exactly to a MIS problem on a graph $\mc{G}^\eff$ obtained from $\G=(\mc{V},\mc{E})$ by replacing each edge $\{u,v\}\in\mc{E}$ by a string of an even number $2\kappa_{u,v}$ of vertices (see Fig.~\ref{Fig_separation}). These additional vertices transport the independent set constraint giving an exact correspondence between the maximum independent sets of $\mc{G}$ and $\mc{G}^\eff$, in complete analogy to the discussion in Sec.~\ref{Sec:NPCompletenessUD}.

In summary, we conclude that for each planar graph $\mc{G}=(\mc{V},\mc{E})$ of maximum degree 3 one can thus efficiently find an arrangement of $\mc{O}(k|\mc{V}|)=\mc{O}(|\mc{V}|^{5/4})$ atoms, and detuning patterns such that the ground state of the corresponding Rydberg Hamiltonian directly reveals a maximum independent set of $\mc{G}$.
This allows us to prove our main result of Theorem~\ref{thm:main}:

\begin{proof}[Proof of Theorem~\ref{thm:main}]
To show that it is NP-complete to decide whether the ground state energy of $H_\Ryd(\Omega_v=0)$ is lower than some threshold, we first note that it is in NP.
This is trivial since the ground state of $H_\Ryd(\Omega_v=0)$ has a classical description and serves as a proof if the answer is yes.
Now we show reduction from the NP-complete problem of deciding whether the size of MIS on $\mc{G}$, a planar graph of maximum degree 3, is $\ge a$.
Let $a'=a+\sum_{\{u,v\}\in\mc{E}} \kappa_{u,v}$.
On one hand, if the size of MIS of $\mc{G}$ is $\ge a$, then the ground state energy of $H_{\rm Ryd}$ is $\le -a'\Delta^\eff+\xi+\eta$.
On the other hand, if the size of MIS of $\mc{G}$ is $\le a-1$, then  the ground state energy of $H_{\rm Ryd}$ is $\ge -(a'-1)\Delta^\eff + \xi - \eta$.
By choosing $\eta < \Delta^\eff/2$, the MIS of $\mc{G}$ has size $\ge a$ if and only if the ground state energy of $H_\Ryd$ is $\le -(a'-1/2) \Delta^\eff + \xi$. 
\end{proof}

The remaining part of this paper is devoted to elaborating on the various assumptions that underlie this mapping from the original Rydberg Hamiltonian \eqref{eq:HRYD} to the effective MIS problem \eqref{eq:effectiveMIS}.

\subsection{Low energy sector}\label{Sec:Low_energy}
We find it useful to make a few general remarks about the structure of the ground state of the Rydberg Hamiltonian, and note a few simple properties. 

\subsubsection{Maximal independent sets}\label{Sec:MaximalIS}
In this subsection, we show that the ground state of $H_\Ryd$ corresponds to a maximal independent set on the associated UD graph if the detuning of each spin is in a proper range. Maximal independent sets are independent sets where no vertex can be added without violating the independence property. Clearly the largest maximal independent set is the maximum independent set. For configurations corresponding to such sets, clearly no two neighboring spins can be in state $\ket{1}$ (independence), and no spin can be in state $\ket{0}$ if all its neighbors are in state $\ket{0}$ (maximality). 

For the Rydberg Hamiltonian it is easy to see that no two neighboring vertices on $G$ can be simultaneously in state $\ket{1}$ if the system is in the ground state of $H_{\rm Ryd}$ as long as the detuning $\Delta_v$ on all vertices obeys
\begin{align}\label{max_detuning}
\Delta_v<\Delta_{\rm max}\equiv C/D^6\xrightarrow{\phi\gg1} \mc{U},
\end{align} where $D$ is the maximal Euclidean distance between two vertices that are neighboring on $G$. 

While for the MIS Hamiltonian $\Delta>0$ is sufficient to energetically favor spins to be in the state $\ket{1}$ and thus maximality of the ground state, this is no longer the case for the Rydberg Hamiltonian. In this case, the energy gain $\Delta_v$ has to exceed the energy cost associated with the interaction energy of a spin $v$, with all other spins that are in state $\ket{1}$. In order to determine the required detuning that guarantees maximality, we bound the interaction energy of a spin, $v$ whose neighbors (on $G$) are all in state $\ket{0}$, with the rest of the system. In order to obtain a bound, we split the energy to a contribution from distant spins (bounded by $E_{\rm dist}=\mc{O}(\mc{U}/k^5)$, see \eqref{Seq:distantVert}) and a contribution from the remaining ``close'' spins, $B_1$. The latter is maximal if the spin $v$ is directly neighboring to a vertex of degree 3. Since no two neighboring spins can be simultaneously in the state $\ket{1}$ as long as $\Delta_v<\Delta_{\rm max}$, it can be upper bounded by $B_1\leq \mc{U}\sum_{i=1}^{\infty}\frac{1}{(2i)^6}+2\mc{U}\sum_{i=1}^{\infty}\frac{1}{((2i-1)^2+1)^3}=0.268031\times \mc{U}$.
The first term corresponds to the maximum interaction of spin $v$ with spins on the same grid line, while the second term bounds the interaction with spins on the two perpendicular  gridlines [see also Fig.~\ref{Fig_maximal_sets}(a)].

In summary, we find that the configuration of spins in the ground state of the Rydberg Hamiltonian corresponds to a maximal independent set on the associated UD graph (edge between nearest neighbor) if, $C/D^6\geq\Delta_v \geq 0.268031\times C/d^6$. 

\begin{figure}[t]
\begin{center}
\includegraphics[width=0.45\textwidth]{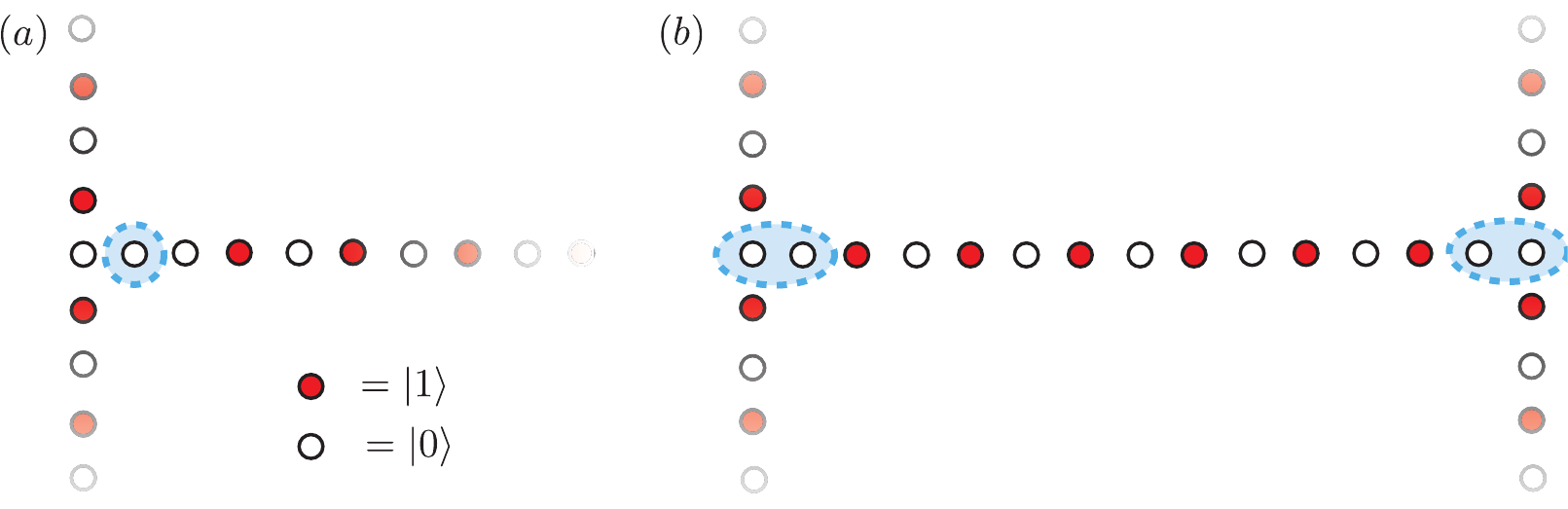}
\caption{(a) Example of a non-maximal independent set configuration. It is energetically favorable to flip the spin indicated in blue if the detuning is $\Delta_v>0.268031\times \mc{U}$ (see Sec.~\ref{Sec:MaximalIS}). $\G$ embedded on a grid. (b) Two domain walls (indicated in blue) along a straight segment. By merging the two domain walls one obtains the defect-free configuration with one more spin in state $\ket{1}$. The increase in interaction energy between these two configurations can be bounded by $0.504748\times \mc{U}$ (see Sec.~\ref{DWSS})}\label{Fig_maximal_sets}
\end{center}
\end{figure}

\subsubsection{Domain walls on straight segments}\label{DWSS}

Under these conditions, the spins on each edge $\mc{E}$ of the graph $\G$ are therefore in an ordered configuration, i.e a configuration where spins are alternating in state $\ket{0}$ and $\ket{1}$, up to so called domain walls, where two (but not more) neighboring spins are in state $\ket{0}$. From the discussion in Sec.~\ref{Sec:NPCompletenessUD}, we know that a MIS configuration on $G$ has at most one such domain wall on array of spins connecting two special vertices $i$ and $j$. In order to ensure that the ground state of the Rydberg system respects this property, the detuning has to be chosen large enough. To see this, assume that such a straight segment is in an ordered state with $n$ domain walls. When two such domain walls are combined an additional spin, $v$, can be flipped from state $\ket{0}$ to state $\ket{1}$, lowering the single-particle contribution to the total energy by $\Delta_v$. The corresponding difference in interaction energy grows with the distance of the domain walls. In particular, it is maximal, if the two domain walls are initially on opposite ends of a segment connecting two junctions [see Fig.~\ref{Fig_maximal_sets}(b)].
It can be bounded by $E_2+E_{\rm dist}$, where 
\begin{align}
{E_2} &\le
\sum_{i,j=1}^{\infty}
\medmath{ \left[\frac{4{\mc{U}} }{((2i-1)^2+(2j-1)^2)^3}-\frac{4{\mc{U}} }{((2i-1)^2+(2j)^2)^3}\right] }\no \\
&+\sum_{i=1}^\infty \medmath{\frac{{\mc{U}} }{(2i)^6}} \leq 0.490084\times {\mc{U}} .
\end{align}
This bound can be understood by looking at Fig.~\ref{Fig_maximal_sets}(b), where we obtain the zero-domain-wall configuration from the one shown with two domain walls by shifting the ordered spins on the middle bar of the H-shape (i.e., between the domain walls) one atom to the left, and then flipping a spin next to the right junction to the state $\ket{1}$.
Thus, the first sum corresponds to the increase in interactions between the middle bar of the H-shape with the sides, and the last sum is the extra interaction from the newly flipped spin.
Thus (for large enough $k$ where $E_{\rm dist}$ is negligible) it is always favorable to merge two neighboring domain walls along a straight segment if
\begin{align} \label{min_detuning}
\Delta_v\geq  \Delta_{\rm min} \equiv 0.490084\times \mc{U}.
\end{align}

In summary we find that, for $\Delta_{\rm min}<\Delta_v<\Delta_{\rm max}$ the ground state of the Rydberg Hamiltonian is ordered on all segments connecting two special vertices, with at most one domain wall per segment. This justifies the assumption in Sec.~\ref{Sec:effectiveModel}, that we can restrict the analysis of the low energy sector to only 4 configurations of the structures $B_{i,j}$. We simply choose the detuning for all those spins to be equal and in the above range, and denote it by $\Delta_B$.

\subsection{Relevant configurations and effective spins}\label{Sec:reduction}
In this subsection we focus on each special structure $A_i$ consisting of a vertex $i$ and its $2q$ neighbors on each leg individually. From Sec.~\ref{Sec:Low_energy} we know that we can restrict the discussion to states that are ordered in $A_i$ with at most one domain wall per leg. The central idea below is to choose detuning patterns for the spins in $A_i$ such that any such domain wall is energetically pushed away from spin, $i$, out of the structure $A_i$. This justifies the assumption underlying the pseudo-spin model introduced in Sec.~\ref{Sec:effectiveModel}, that only the two ordered states on $A_i$ have to be considered for the ground state. 

\subsubsection{Corners}
We first consider a corner vertex and the spins on the two legs. We show that we can choose a detuning pattern consistent with the requirement $\Delta_{\rm max}\geq \Delta_v\geq  \Delta_{\rm min}$ such that  only two configurations are relevant in the low energy sector. These are the two ordered states across the corner. We label these two states according to the state of the corner spin, $\ket{0}$ or $\ket{1}$,  by $s=0$ and $s=1$ respectively.

Assume the corner spin is in state $\ket{0}$, and on one of the legs the two spins at distance $2p-2$ and $2p-1$ from the corner are in state $\ket{0}$, forming a domain wall. The domain wall can be moved by one unit (i.e.~$p\rightarrow p+1$) by flipping the state of the spins $2p-1$ and $2p$. The interaction energy increases in this process. Its amount can be bounded by $\mc{U}F_0(p)+\mc{O}(\mc{U}/k^5)$ with 
\begin{align}\label{eq:F0}
F_0(p)&=  \sum_{i=0}^{\infty}\medmath{\left[\frac{1}{((2i+1)^2+(2p-1)^2)^3}-\frac{1}{((2i+1)^2+(2p)^2)^3}\right]}.
\end{align}
This bound can be understood as follows. Effectively, the one spin excitation is moved by one site towards the corner. While the interaction energy of this excitation with all spins on the same leg is reduced (as long as $2p<k$), and thus is upper bounded by zero, the interaction energy with all spins on the other leg increases. Since any defect on this leg would decrease this interaction energy, it is maximized if all spins on the other leg are in the perfectly ordered state, which gives the bound \eqref{eq:F0}. 

While the interaction energy increases in this process, the contribution from the single particle term to the total energy changes by $\Delta_{2p-1}-\Delta_{2p}$. Depending on the choice of the detuning this can lead to an energy gain such that it is energetically favorable to move the domain wall by one unit away from the corner spin. 
It is thus straightforward to see that the energy is minimized if the first $2q$ spins on each of the legs are in the perfectly ordered state if the corner spin is in state $\ket{0}$ and the detuning for spins satisfy 
\begin{align}\label{eq:CornerDetuing_diff0}
\Delta_{2p-1}-\Delta_{2p}\geq \mc{U} F_0(p),
\end{align}
for $p=1,2\dots, q$.

If the corner spin is in state $\ket{1}$, we find analogously that the energy is minimized if the first $2q+1$ spins on each of the leg are in the perfectly ordered if detunings satisfy
\begin{align}\label{eq:CornerDetuing_diff1}
\Delta_{2p}-\Delta_{2p+1}\geq \mc{U} F_1(p),
\end{align}
with 
\begin{align}
F_1(p)&=\sum_{i=0}^{\infty}\left[\frac{1}{((2i)^2+(2p)^2)^3}-\frac{1}{((2i)^2+(2p+1)^2)^3}\right].
\end{align} 

The conditions \eqref{eq:CornerDetuing_diff0} and \eqref{eq:CornerDetuing_diff1} can be clearly satisfied by the choice (for $p=1,\dots, q$)
\begin{align}\label{eq:CornerDet1}
\Delta_{2p}
&=\Delta_\infty+\mc{U}\sum_{i=p}^\infty [F_1(i)+F_0(i+1)],\\
\Delta_{2p-1}
&=\Delta_\infty+\mc{U}\sum_{i=p}^\infty[ F_0(i)+ F_1(i)].\label{eq:CornerDet2}
\end{align}
These sums are convergent and can be efficiently evaluated numerically. $\Delta_x$ is monotonically decreasing with $x$ and for large $x$ approaches $\Delta_\infty +\mc{O}(\mc{U}/x^5)$. 
Moreover we find that the maximum value of $\Delta_x$ in this sequence evaluates to $\Delta_1= \Delta_\infty+0.134682  \times \mc{U}$. It is therefore clearly possible to choose $\Delta_\infty$ such that the detuning along the legs of the corner are within the required range.

With the above detuning pattern and the choice $\Delta_\infty\geq \Delta_B$, the ground state configuration is necessarily such that the $4q+1$ spins forming a corner structure are in one of the two ordered states. We label these states $s=0$ and $s=1$ according to the corresponding state of the corner spin ($\ket{0}$ and $\ket{1}$).

Note that the choice \eqref{eq:CornerDet1} and \eqref{eq:CornerDet2} fixes the detuning on all spins except the detuning of the corner spin, denoted $\Delta_C$. This will allow us to tune the relative energy between the two relevant configurations, i.e.~$E_{A_i}^1-E_{A_i}^0$. 
To calculate this difference, we first define the quantity $\mc{I}_C$ as the difference in interaction energies between the two spin configurations,
\begin{align}
\mc{I}_C&=\sum_{i=1}^{q}\sum_{j=1}^q\left[\frac{{\mc{U}}}{((2i\!-\!1)^2\!+\!(2j\!-\!1)^2)^3}-\frac{{\mc{U}}}{((2i)^2\!+\!(2j)^2)^3}\right]\no\\
&-2\sum_{i=1}^{q}\frac{{\mc{U}}}{(2i)^6}= 0.0932973\times {\mc{U}}+\mc{O}(\mc{U}/q^5).
\end{align}
Similarly we calculate the difference in energy of the two configurations due to the single particle term, 
\begin{align}
\mc{D}_C&=\Delta_C+2\sum_{p=1}^q\lr{ -\Delta_{2p-1}+\Delta_{2p}}\no\\
&=\Delta_C-\mc{U}\times0.237094+\mc{O}(\mc{U}/q^5).
\end{align}
For the corner structures we thus get $E_{A_i}^1-E_{A_i}^0=-\mc{D}_C-\mc{I}_C$ which evaluates to
\begin{align}
E_{A_i}^1-E_{A_i}^0
&= -\Delta_C+0.143797 \times \mc{U}+\mc{O}(\mc{U}/q^5),
\end{align} and which can be fully tuned by the detuning of the corner spin. 

\subsubsection{Junctions}
Junctions can be treated in a similar way as corners. Again, we want to choose a detuning pattern on the legs of a junction such that it is energetically always favorable to push domain walls away from the junction, thus guaranteeing that in the ground state, a junction can only be in one of the two ordered configurations. 

Let us first consider the case where the central spin (i.e. the degree-3 vertex) is in state $\ket{0}$. We refer to the three legs by $X$, $Y$ and $Z$, where for concreteness $X$ and $Z$ are on the same gridline.
Consider a situation where two vertices on leg $X$ at a distance $2p-2$ and $2p-1$ from the central vertex are in state $\ket{0}$ and thus form a domain wall. In order to push this domain wall one unit away from the vertex, the state of vertices $2p-1$ and $2p$ has to be flipped. The interaction energy required to do so is bounded by $\mc{U}F_0(p)$. Note that this bound holds true regardless of whether (or where) on legs $Y$ and $Z$ a domain wall is present. 
Similarly, the interaction energy required to push a domain wall on leg $Y$ by one unit from the corner is bounded by $2\mc{U}F_0(p)$.

In the other case, i.e. if the central vertex is in state $\ket{1}$, similar bounds can be found. If the two vertices on leg $X$ at a distance $2p-1$ and $2p$ from the central vertex are in state $\ket{0}$, it  increases in interaction energy when the state of vertices $2p$ and $2p+1$ are flipped, i.e. when the domain wall is pushed away from the center by one unit, is bounded by $\mc{U}F_1(p)$. Again, this bound is valid independent of the configuration of the spins on legs $Y$ and $Z$. 
The interaction energy required to push a domain wall on leg $Y$ by one unit from the corner is bounded by $2\mc{U}F_1(p)$.

Therefore, the state that minimizes the energy does not contain any domain wall on the first $2q+1$ spins on each leg if the detuning pattern satisfies 
\begin{align}
\Delta_{2p-1}^{(X)}-\Delta_{2p}^{(X)}&\geq \mc{U} F_{0}(p),\no\\
\Delta_{2p-1}^{(Y)}-\Delta_{2p}^{(Y)}&\geq 2\mc{U} F_{0}(p),\no\\
\Delta_{2p}^{(X)}-\Delta_{2p+1}^{(X)}&\geq \mc{U} F_1(p),\no\\
\Delta_{2p}^{(Y)}-\Delta_{2p+1}^{(Y)}&\geq 2\mc{U} F_1(p),\no
\end{align}
for $p=1,2,\dots q$, and $\Delta_v^{(Z)}=\Delta_v^{(X)}$. Here $\Delta_v^{(\sigma)}$ denotes the detuning of the $v$-th spin on leg $\sigma$.
This can be achieved by the choice
\begin{align}
\Delta_{2p}^{(X)}&=\Delta_\infty+\mc{U}\sum_{i=p}^\infty (F_1(i)+F_0(i+1)),\no\\
\Delta_{2p-1}^{(X)}&=\Delta_\infty+\mc{U}\sum_{i=p}^\infty( F_0(i)+ F_1(i)),\no\\
\Delta_{2p}^{(Y)}&=\Delta_\infty+2\mc{U}\sum_{i=p}^\infty (F_1(i)+F_0(i+1)),\no\\
\Delta_{2p-1}^{(Y)}&=\Delta_\infty+2\mc{U}\sum_{i=p}^\infty( F_0(i)+ F_1(i)),\no
\end{align}
and $\Delta_{v}^{(Z)}=\Delta_{v}^{(X)}$.
For the choice above we find that the maximum value of $\Delta_v^{(\sigma)}$ in this sequence evaluates to $\Delta_1^{(Y)}= \Delta_\infty+0.269364  \times \mc{U}$. Thus, all the detunings on the legs of a junction $A_i$ are within the range $[\Delta_\infty,\Delta_\infty+0.269364  \times \mc{U}]$. Thus it is clearly possible to choose $\Delta_\infty$ such that all of them are in the allowed range, $\Delta_{\rm min}<\Delta_v<\Delta_{\rm max}$. With the additional choice $\Delta_\infty>\Delta_B$, the above arguments show that only the two ordered configurations are relevant for the ground state.

Analogous to the situation of corner structures, we can use the detuning of the junction spin, denoted $\Delta_J$ to tune the relative energy between the two relevant configurations.
The difference in interaction energies of the two spin configurations in this structure,  
\begin{align}
&{\mc{I}_J}=\sum_{i=1}^q\sum_{j=1}^{q}\lr{\frac{{\mc{U}}}{(2i-2+2j)^6}-\frac{{\mc{U}}}{(2i+2j)^6}}-3\sum_{i=1}^{q}\frac{{\mc{U}}}{(2i)^6}\no\\&
+2\sum_{i=1}^{q}\sum_{j=1}^q\lr{\frac{{\mc{U}}}{((2i-1)^2+(2j-1)^2)^3}-\frac{{\mc{U}}}{((2i)^2+(2j)^2)^3}},\no
\end{align}
can be evaluated to $\mc{I}_J= 0.218387\times \mc{U}+\mc{O}(\mc{U}/q^5)$.
Similarly, we calculate the difference in energy of the two configurations due to the single particle term, 
\begin{align}
\mc{D}_J
&=\Delta_J+4\mc{U}\sum_{p=1}^q\sum_{i=p}^\infty( -F_0(i)+F_0(i+1)),\no
\end{align}
which evaluates to $\mc{D}_J=\Delta_J- 0.474188\times \mc{U} +\mc{O}(\mc{U}/q^5)$.
We thus obtain the quantity $E_{A_i}^1-E_{A_i}^0=-\mc{D}_J-\mc{I}_J$ as
\begin{align}E_{A_i}^1-E_{A_i}^0&= -\Delta_J +0.255801\times \mc{U} +\mc{O}(\mc{U}/q^5).\end{align}

\subsubsection{Other special vertices}

In addition to vertices at corners and junctions, we have other special vertices. These are vertices of degree 1 (open ends), vertices on a grid point with two legs on the same grid line (straight structures), and irregular vertices. For all of these special vertices one can repeat the above analysis. 

\paragraph{Open ends.} It is straightforward to see that it always lowers the energy if a domain wall is moved away from a spin at the end of an open leg if the detuning is constant for the spins on the leg. Hence, we can naturally restrict to the pseudo-spin states corresponding to the two ordered configurations on the $2q$ spins adjacent to the spin at the end of an open leg. 
For such pseudo-spins, we therefore get 
\begin{align}
E_{A_i}^1-E_{A_i}^0&=-\Delta_O+\mc{U}\sum_{s=1}^q\frac{1}{(2s)^6}\no\\
&=-\Delta_O+ 0.015896\times \mc{U} +\mc{O}(\mc{U}/q^5),
\end{align}
 where
 $\Delta_O$ is the detuning for the spin corresponding to the vertex of degree 1. The homogenous detuning on the $2q$ spins adjacent to the latter can be chosen to be equal to $\Delta_\infty$.

\paragraph{Straight structures.}
For a regular special vertex we can guarantee that the relevant configurations are given by the two ordered states by chosing a detuning for all $4q$ spins on the leg as $\Delta_\infty>\Delta_B$. With this choice it would be energetically favorable to move a potential domain wall into the adjacent neighboring regions. 
It is easy to evaluate the energy difference 
\begin{align}\label{eq:deg2}
E_{A_i}^1-E_{A_i}^0&=-\Delta_S+\mc{U}\sum_{i=1}^{2q}\frac{1}{(2i)^6}\no\\&=-\Delta_S+0.015896 \times \mc{U}+\mc{O}(\mc{U}/q^5).
\end{align}
Here $\Delta_S$ denotes the detuning of the special vertex.

\paragraph{Irregular structures.}
Irregular vertices can be treated identically as straight structures. Since the spacing of the spins close to the irregular vertex is slightly larger than otherwise, any domain wall will be pushed away from the irregular structure naturally, if the detuning in the irregular structure is larger that $\Delta_{B}$. Thus, again, only the two ordered  configurations are relevant for the ground state. The corresponding energy difference can be numerically evaluated for every choice of $\phi$. In the large $\phi$ (and $q$) limit we obtain the same analytic expression as in \eqref{eq:deg2}.

\subsection{Effective energy\label{Sec:Effective_energy}}

In this subsection, we determine the effective detuning $\Delta_i^{\rm eff}$ of the pseudo-spins and their effective interaction energies $U_{i,j}^{\rm eff}$.

\subsubsection{Effective interactions}
The reduction to a model of effective spins in Sec.~\ref{Sec:effectiveModel}, relies on the fact that only four spin configurations are relevant in each region $B_{i,j}$ to describe the ground state. These correspond to the four possible configurations of the spins in the adjacent regions $A_i$ and $A_j$. To see this we set that  the detuning in $B_{i,j}$ to be homogeneous, $\Delta_B$. First, note, if $s_i=1$ and $s_j=0$  (or vice versa), we found in \eqref{DWSS} that the lowest energy configuration must correspond to the perfectly ordered state on $B_{i,j}$ (with $b_{i,j}$ spins in state $\ket{1}$), if $\Delta_{\rm max}>\Delta_{B}>\Delta_{\rm min}$. Any other configuration would require at least two domain walls. 
Second, if $s_i=s_j=0$ then energy can always be lowered by arranging the spins in $B_{i,j}$ in an ordered configuration with $b_{i,j}$ spins in state $\ket{1}$, and one domain wall. It is easy to see that the position of this domain wall does not change the energy up to $\mc{O}(\mc{U}/k^5)$, such that we do not need to distinguish between the different domain wall configurations.  
Finally, if $s_i=s_j=1$ the lowest energy configuration is similarly achieved by an ordered configuration with one domain wall, if $\Delta_{\rm max}>\Delta_B> \Delta_{\rm min}$. While the position of the domain wall is again irrelevant, we note that in this case only $b_{i,j}-1$ spins are in state $\ket{1}$. 

The relevant energy differences between these different relevant configurations can be readily calculated, as 
\begin{align}
E_{B_{i,j}}^{1,0}-E_{B_{i,j}}^{0,0}\! 
&= {\mc{U}}\sum_{r=1}^{\left\lceil \frac{b_{i,j}}{2}\right\rceil}\sum_{s=1}^{\left\lfloor \frac{b_{i,j}}{2}\right\rfloor}
\medmath{ \left[\frac{1}{(2r
\!+\!2s\!-\!2)^6}-\frac{1}{(2r\!+\!2s\!-\!1)^6}\right]}\no
\end{align}
If $A_i$ and $A_j$ are not connected, this is trivially zero. Else, this term becomes independent of $i$ and $j$ in the large $q$ and $k$ limit, since $b_{i,j}=\mc{O}(k)$, and evaluates to
\begin{align}
E_{B}^{1,0}-E_{B}^{0,0} = 0.0146637 \times \mc{U}+ \mc{O}(\mc{U}/k^5)
\end{align}
Analogously, $E_{B_{i,j}}^{1,1}-E_{B_{i,j}}^{1,0}$ becomes independent of $i$ and $j$, and can be calculated via
\begin{align}
&E_B^{0,0}-E_B^{1,0} = E_B^{1,1}-E_B^{1,0}-\Delta_B+  \sum_{r=1}^{\infty} \medmath{ \frac{\mc{U}}{(2r)^6}} + \mc{O}(\mc{U}/k^5)\no\\
 & = E_B^{1,1}-E_B^{1,0}-\Delta_B + 0.015896\times \mc{U} + \mc{O}(\mc{U}/k^5)
\end{align}
This gives the effective interaction between pseudo-spins $U_{i,j}^{\rm eff}\equiv U_{\rm eff}=E_B^{1,1}+E_B^{0,0}-2E_B^{1,0}$ (see Eq.~\eqref{Def_eff_interactions}) as
\begin{align}
U^{\rm eff}=\Delta_B-0.0134313\times \mc{U}+\mc{O}(\mc{U}/k^5).
\end{align}
Observe that it is solely determined by the choice of the detuning in the connecting structures, $\Delta_B$. 

\subsubsection{Effective detuning}
The effective detuning for a pseudo spin $i$ is given by $\Delta_i^{\rm eff}=E_{A_i}^{0}-E_{A_i}^1-m_i(E_B^{1,0}-E_B^{0,0})$, where $m_i$ is the degree of the special vertex, $i$ (i.e.~$m_i=2$ for corner vertices and straight structures, $m_i=3$ for junctions, and $m_i=1$ for open legs, see Eq.~\eqref{Def_eff_detuning}). We recall that the value of $E_{A_i}^{0}-E_{A_i}^1$ can be fully tuned by the choice of the detuning of spin $i$. We choose this such that we obtain a homogeneous effective detuning, $\Delta_i^{\rm eff}=\Delta^{\rm eff}$, for all four types of pseudo-spins. This is achieved by 
\begin{align}
\Delta_C&=\Delta^{\rm eff}+0.173124 \times \mc{U},\no\\
\Delta_J&=\Delta^{\rm eff}+0.299792 \times \mc{U},\no\\
\Delta_O&=\Delta^{\rm eff}+0.0305597 \times \mc{U},\no\\
\Delta_S&=\Delta^{\rm eff}+0.0452234 \times \mc{U}.\no
\end{align}
Importantly, this is compatible with the requirement $\Delta_{\rm max}\geq \Delta_v\geq \Delta_{\rm min}$, and the realization of an effective spin model with $0<\Delta^{\rm eff}<U^{\rm eff}$. 
\\
\section{Conclusion and Outlook}
Our work demonstrates  a direct connection between the many-body problem of spins interacting via van der Waals interactions and computational complexity theory. We have shown that individual control over the positions of such spins allows one to exactly encode NP-complete optimization problems. This result is obtained from a reduction from \Mis on planar graphs with maximum degree 3. Understanding this link between many-body physics problems and complexity theory is particularly important for application in quantum optimization. Our results imply that quantum optimizers based on current experimental techniques to trap and manipulate neutral atoms \cite{MISPRL} can address NP-hard optimization problems and thus have the potential to explore a potential quantum speedup \cite{Preskill:2018gt}. 

In this work we focussed mainly on the task of exactly finding the size of the MIS and a corresponding configuration. The relation between the Rydberg blockade mechanism and \Mis on unit disk graph however suggests intriguing questions related to  approximate optimization. Remarkably, polynomial time approximation algorithms exist for \Mis on unit disk graphs~\cite{Nandy:2017ij}. It would be interesting to explore if quantum approximate optimization algorithms (QAOA) for the corresponding problems \cite{Farhi:2014wk} can outperform these classical approximate methods. In addition,  it is interesting to extend the present analysis to both exact and approximate solutions of optimization programs for more general graph structures. As described in Ref.~\cite{MISPRL}, the latter can be encoded in Rydberg blockade Hamiltonian using individual addressing and multiple atomic sublevels. 

\section*{Acknowledgements} We thank H.~Bernien, E.~Farhi, A.~Harrow, A.~Keesling, W.~Lechner, H.~Levine,   A.~Omran and P. Zoller for useful discussions. This work was supported through the National Science Foundation (NSF), the Center for Ultracold Atoms, the Air Force Office of Scientific Research via the MURI, the Vannevar Bush Faculty Fellowship and DOE. H.P. is supported by the NSF through a grant for the Institute for Theoretical Atomic, Molecular, and Optical Physics at Harvard University and the Smithsonian Astrophysical Observatory. S.C.~acknowledges support from the Miller Institute for Basic Research in Science.

\bibliography{HannesBib,additional_refs}		
\end{document}